\date{\today}
\newcommand{\Z}{{\mathbb Z}}
\newcommand{\R}{{\mathbb R}}
\newcommand{\N}{{\mathbb N}}
\newcommand{\hdim}{\dim_{\mathrm{H}}}
\newcommand{\tr}{\mathrm{tr}}
\newtheorem{theorem}{Theorem} [section]
\newtheorem{lemma}[theorem]{Lemma}
\newtheorem{prop}[theorem]{Proposition}
\newtheorem{coro}[theorem]{Corollary}
\DeclarePairedDelimiter\ceil{\lceil}{\rceil}
\DeclarePairedDelimiter\floor{\lfloor}{\rfloor}
\begin{document}

\title{Frequency dependence of H\"older continuity for quasiperiodic Schr\"odinger operators}
\author{Paul E. Munger}
\address{Rice University, Houston, TX}
\email{pem1@rice.edu}
\begin{abstract}
We prove estimates on the H\"older exponent of the density of states measure for discrete Schr\"odinger operators with potential of the form $V(n) = \lambda(\floor{(n+1)\beta} - \floor{n\beta})$, with $\lambda$ large enough, and conclude that for almost all values of $\beta$, the density of states measure is not H\"older continuous.
\end{abstract}
\maketitle

\section{Introduction}

The discovery of quasicrystals by Dan Shechtman \cite{SBGC} has elicited considerable interest (for instance \cite{T}, \cite{BM}) in the subject of aperiodic order from mathematicians and physicists. The Fibonacci Hamiltonian has been one of the canonical models for a quasicrystal. It is the one-dimensional discrete Schr\"odinger operator on $\ell^2(\Z)$ specified by
$$(H\psi)(n) = \psi(n-1) + V(n)\psi(n) + \psi(n+1).$$
The sequence $V$ is called the \textit{potential}; for the Fibonacci Hamiltonian, $V(n) = \lambda(\floor{(n+1)\beta} - \floor{n\beta})$, where $\lambda >0$ is called the \textit{coupling\ constant} and $\beta = \frac{\sqrt{5} - 1}{2}$ the \textit{frequency}.

The spectral properties of the Fibonacci Hamiltonian qualitatively agree with those of physical quasicrystals. For example,
\begin{enumerate}

\item The spectrum $\Sigma$ is a Cantor set of zero Lebesgue measure (\cite{Su}). Its point spectrum is empty (\cite{Su}, \cite{DL}), so the spectral measure is purely singular continuous. As $\lambda\to\infty$, the Hausdorff dimension of the spectrum behaves like $1.831 \cdot \frac{-\log{\beta}}{\log\lambda}$(\cite{DEGT}).

\item The spectral measure $\mu$ is uniformly $\alpha$-H\"older continuous for some $\alpha > 0$. This means that there is a $\delta > 0$ such that for all $x$ and $y$ with $|x-y| < \delta$, $\mu[x,y] < |x-y|^\alpha$ (\cite{DL}). An asymptotically optimal estimate of $\alpha$ has not been established, but see \cite{DG11}.

\item The \textit{density of states measure} $N$ is also $\alpha$-H\"older continuous (\cite{DG},\cite{DKL00}). The distribution function of the density of states measure is given by the formula (\cite{H})
$$N([x,y]) = \lim_{n\to\infty} \frac{\#\{\mathrm{eigenvalues\ of\ } H_n \mathrm{\ in\ } [x,y]\}}{n},$$
where $H_n$ is the restriction of $H$ to the $\ell^2$ sequences supported on $[1,n]$. As $\lambda \to \infty$, the optimal H\"older exponent behaves like $\frac{-3 \log\beta}{2\log\lambda}$.

\end{enumerate}

One wonders how these properties depend on the frequency. It is already known (\cite{BIST}) that $\Sigma$ is a Cantor set of zero Lebesgue measure for all irrational values of $\beta$. For $\lambda$ large enough, \cite{LW} established estimates for the Hausdorff dimension of $\Sigma$. Let $[0;a_1,a_2,a_3,\dots]$ be the continued fraction expansion of $\beta$ (see \cite{Kh} for an introduction to continued fractions). Then when $\underline{M}(\beta) := \liminf_{k\to\infty} \sqrt[k]{a_1\dots a_k}$ is finite, 
\begin{multline*}
\max\left\{\frac{\log 2}{10 \log 2 + 3\log(4(\lambda-8))}, \frac{\log M(\beta) - \log 3}{\log M(\beta) + \log(12(\lambda-8))} \right\} \le \hdim (\Sigma)\\
\le \frac{2\log M(\beta) + \log 3}{2\log M(\beta) + \log(\lambda - 8) - \log 3}.
\end{multline*}
When $M$ is infinite, the Hausdorff dimension of the spectrum is $1$; notice that the upper bound depends on $\lambda$ in the same way for all $\beta$. In \cite{LQW}, Liu, Qu, and Wen derive an expression for the Hausdorff dimension of $\Sigma$ and show that for all $\beta$ and $\lambda > 24$, the Hausdorff dimension is Lipschitz continuous.

Using methods like those of \cite{DG}, this article determines how the $\alpha$-continuity of the density of states measure depends on $\beta$, assuming throughout that $\overline{M}(\beta) = \limsup _{k\to\infty} \sqrt[k]{a_1 a_2 \dots a_k} < \infty$ and $\lambda > 24$ (recall that $\overline{M}$ is finite for almost all $\beta$). 

When the continued fraction coefficients of $\beta$ are constant, the behavior is like the Fibonacci Hamiltonian:

\begin{theorem}\label{hcont-bb}
Suppose $\beta = [0;b,b,b,\dots]$. Then for every
$$\gamma < \begin{dcases}
\frac{2 \log\beta}{-b\log(\lambda+5) - 3\log(b+2)} &b > 3 \\
\frac{\log\beta}{-\log(\lambda+5) - 3\log(b+2)} &b=2,3 \\
\frac{3\log\beta}{-2\log(27(\lambda+5))} &b=1
\end{dcases},$$
there is a $\delta > 0$ such that the density of states measure $N$ associated to the family of Schr\"odinger operators with frequency $\beta$ and coupling strength $\lambda$ obeys
$$|N(x) - N(y)| \le |x-y|^\gamma$$
for all $x$, $y$ with $|x-y| < \delta$.
\end{theorem}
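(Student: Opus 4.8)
The plan is to realize $N$ as a limit of densities of states of periodic approximants and to reduce the desired Hölder bound to a uniform lower bound on the lengths of the spectral bands of these approximants. Let $q_k$ denote the denominators of the convergents of $\beta$, so that $q_{k+1} = b q_k + q_{k-1}$ and $q_k \asymp \beta^{-k}$. Let $M_k(E)$ be the transfer matrix across one block of length $q_k$, a product of the one-step matrices $\begin{pmatrix} E - V(n) & -1 \\ 1 & 0\end{pmatrix}$, and let $t_k(E) = \tr M_k(E)$. The spectrum $\Sigma_k$ of the $q_k$-periodic operator is $\{E : \abs{t_k(E)} \le 2\}$, a disjoint union of $q_k$ closed bands, and the density of states $N_k$ of that periodic operator assigns mass exactly $1/q_k$ to each band. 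Since the Sturmian potential agrees with its $q_k$-periodization off a set of density $O(1/q_k)$, a rank count gives the uniform estimate $\abs{N(E) - N_k(E)} \le C/q_k$. Consequently, for any interval $[x,y]$, $N([x,y]) \le q_k^{-1}\bigl(\#\{\text{bands of }\Sigma_k\text{ meeting }[x,y]\} + 2C\bigr)$, so everything reduces to showing that each band is long.

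The technical heart is therefore a uniform lower bound of the form: every band of $\Sigma_k$ has length at least $c\,f^{-k}$, for an explicit factor $f = f(b,\lambda)$. A band is a maximal interval on which $\abs{t_k}\le 2$; since $t_k$ sweeps monotonically from $+2$ to $-2$ across it, its length is at least $4/\max\abs{t_k'}$, and it suffices to bound $\abs{t_k'}$ from above on the spectrum. I would control $t_k'$ through the renormalization $M_{k+1} = M_{k-1} M_k^{b}$: differentiating and estimating the resulting products reduces the problem to bounds on $\norm{M_k}$ and $\norm{M_k'}$ along the orbit. For $\lambda > 24$ the trace map is uniformly hyperbolic on the relevant invariant set, which keeps these norms comparable to the leading eigenvalue and lets the recursion close, yielding $\max\abs{t_k'} \le C f^{k}$ with $f$ determined by how the factor $M_k^{b}$ (a Chebyshev polynomial of degree $b$ in $t_k$, contributing a factor $\asymp (\lambda+5)^{b}$) and the factor $M_{k-1}$ compound at each step.

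Granting the band-length bound, the Hölder estimate follows by scale matching. Given $\epsilon = \abs{x-y}$ small, choose $k$ with $c\,f^{-k} \le \epsilon < c\,f^{-(k-1)}$; then an interval of length $\epsilon$ meets at most $f + 2$ bands of $\Sigma_k$, so $N([x,y]) \le (f + 2C + 2)/q_k$. Using $q_k \asymp \beta^{-k}$ and $\epsilon \asymp f^{-k}$ to eliminate $k$, this reads $N([x,y]) \le C' \epsilon^{\gamma}$ with $\gamma = -\log\beta/\log f$. The strict inequality in the statement lets one absorb the constant $C'$ by shrinking $\delta$ and decreasing $\gamma$ slightly, so it remains only to identify $\log f$ with $\tfrac12\bigl(b\log(\lambda+5) + 3\log(b+2)\bigr)$, and its analogues, in each regime.

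The main obstacle is making the band-length bound sharp and uniform over all $q_k$ bands at every level, since the exponent is exactly $-\,(\text{steps grouped})\log\beta/\log F$ and any lossy estimate degrades $\gamma$. The three regimes come from the optimal number of consecutive renormalization steps to group before estimating: differentiating $M_{k+1}=M_{k-1}M_k^{b}$ and balancing the factor contributed by $M_k^{b}$ against that of $M_{k-1}$ makes two steps optimal for $b>3$ (numerator $2\log\beta$), a single step optimal for $b\in\{2,3\}$ (numerator $\log\beta$), and three steps optimal in the Fibonacci case $b=1$, where the balanced three-term recursion produces the factor $27=3^{3}$ and the numerator $3\log\beta$. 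Confirming that the hyperbolic estimates are strong enough to pin down these explicit constants, and not merely their $\lambda\to\infty$ asymptotics, is where I expect the bulk of the work to lie.
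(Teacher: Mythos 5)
Your outer reduction is the same skeleton as the paper's proof: realize $N$ through the periodic approximants (mass $1/q_k$ per band of $\sigma_{k+1,0}$, which the paper phrases as $N[x_0,y_0]=\lim_k \#\sigma_{k+1,0}\cap[x_0,y_0]/q_k$ via Dirichlet eigenvalue counting), match the scale $|x-y|\asymp f^{-k}$ against $q_k\asymp\beta^{-k}$, and read off $\gamma=-\log\beta/\log f$; your arithmetic here is correct and your target values of $\log f$ are the right ones. The genuine gap is that the entire quantitative content of the theorem sits in the step you defer: the uniform lower bound $c\,f^{-k}$ on \emph{all} band lengths with the \emph{sharp} $f$. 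The paper does not obtain this from derivative estimates plus uniform hyperbolicity of the trace map; it imports the hierarchical machinery of Liu--Wen (Proposition \ref{basic-trace}, Lemmas \ref{band-comb}, \ref{gen-spect}, \ref{bl1}), in which each band carries a type index over $\{I,II,III\}$ and its length is bounded below by a product of entries $Q_l(\tau(l),\tau(l+1))$; Lemma \ref{ub-bb} is then a purely combinatorial optimization over \emph{admissible} type sequences.

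This matters because your proposed route for the missing step does not close at the claimed rate. On a band of $\Sigma_k$ you only know $|t_k|\le 2$; the traces at earlier levels are generically large there (of size $\lambda$ and beyond -- which levels are large is exactly what the type index records), so blindly differentiating $M_{k+1}=M_{k-1}M_k^{b}$ lets every level carry the worst-case Chebyshev factor $(\lambda+5)^{b-1}$, giving $\log f\approx (b-1)\log\lambda$ per level instead of the true $\tfrac{b}{2}\log(\lambda+5)+\tfrac32\log(b+2)$. The factor you lose is precisely the content of Lemma \ref{band-comb}: a type $I$ band (the one contributing $(\lambda+5)^{-(b-1)}$ to the length) spawns only a single band, of type $II$, so type $I$ can occupy at most every other slot of an admissible index when $b\ge 4$; for $b=1$ type $I$ contributes the trivial factor $(\lambda+5)^{0}$ and the admissibility graph forces it into at least one slot in three, so per three levels exactly two factors $((\lambda+5)(b+2)^3)^{-1}=(27(\lambda+5))^{-1}$ accrue. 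None of this combinatorics appears in your argument, and it cannot be replaced by hyperbolicity: uniform hyperbolicity of the relevant trace-map dynamics at $\lambda>24$ is a theorem only in well-studied cases (essentially Fibonacci), and even where available it controls asymptotics in $\lambda$, not the explicit constants $(\lambda+5)$, $(b+2)^3$, $27(\lambda+5)$ that the statement requires. So the proposal is a correct reduction with its technical heart missing; filling it essentially amounts to rebuilding the type-index apparatus of Lemmas \ref{band-comb}--\ref{bl1} and the case analysis of Lemma \ref{ub-bb}.
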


\begin{theorem}\label{opt-bb}
If $\beta = [0;b,b,b,\dots]$, then for every
$$ \tilde\gamma > \begin{dcases}
\frac{2\log\beta}{-b\log(\lambda-8) - \log(b) + b\log 3} &b > 2 \\
\frac{\log\beta}{-\log(\lambda-8) + \log(b) - \log 3} &b=2 \\
\frac{3\log\beta}{-2\log(\lambda-8) + - 2\log 3} &b=1
\end{dcases},$$
and any $0<\delta<1$, there are $x$ and $y$ with $0<|x-y|<\delta$ such that $|N(x) - N(y)| \ge |x-y|^{\tilde\gamma}$.
\end{theorem}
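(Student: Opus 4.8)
The plan is to realize the lower bound by exhibiting, for infinitely many $k$, a single spectral band $B_k$ of the $q_k$-periodic approximant whose density-of-states mass equals $1/q_k$ but whose Lebesgue length is as small as $(\lambda-8)^{-bk/2}$; taking $x,y$ to be the endpoints of $B_k$ then forces $|N(x)-N(y)|\ge|x-y|^{\tilde\gamma}$ once $\tilde\gamma$ exceeds the stated threshold. Here $p_k/q_k$ denote the continued fraction convergents of $\beta$, so that $q_k=bq_{k-1}+q_{k-2}$ and $q_k$ grows like $\beta^{-k}$. First I would record the density-of-states bookkeeping: for $\lambda>24$ all gaps of the periodic approximants are open gaps of $\Sigma$, so by the gap-labelling theorem the integrated density of states takes the value $j/q_k$ at the $j$-th gap of the period-$q_k$ operator, and hence $N(B_k)=1/q_k$ for every band $B_k$ of that operator. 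This converts the problem into a purely geometric one: bound the length of the shortest level-$k$ bands from above.

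For the length estimate I would pass to the transfer matrices. Writing $M_k(E)$ for the one-period transfer matrix and $x_k(E)=\tfrac12\tr M_k(E)$, the bands of the period-$q_k$ operator are the components of $\{E:|x_k(E)|\le 1\}$, and the constant continued fraction expansion yields the renormalization $M_k=M_{k-1}^{\,b}M_{k-2}$. On a band $B_k$ the map $x_k$ sweeps $[-1,1]$, so $|B_k|$ is controlled by $1/\min_{B_k}|x_k'|$; the shortest bands are those created inside a band of $\sigma(H^{(k-1)})$, where $M_{k-1}$ is elliptic and the $b$-fold power $M_{k-1}^{\,b}$ makes $x_k$ oscillate rapidly. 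Quantifying this requires lower bounds on $|x_k'|$, which I would obtain by combining the hyperbolicity of $M_{k-1}$ in the adjacent gaps—$\|M_{k-1}\|$ grows like a power of $(\lambda-8)$ with $q_{k-1}$ in the exponent—with the amplification coming from raising to the $b$-th power. The $b$-fold power is exactly what produces the factor $b$ in the denominator of the threshold, while the matrix-norm growth produces the $\log(\lambda-8)$ factor; a careful count of the new bands and of the worst-case derivative sharpens the bound to $|B_k|\lesssim\bigl(3^{b}/(b(\lambda-8)^{b})\bigr)^{k/2}$, which is where the correction terms $-\log b+b\log 3$ originate. The separate treatment of $b=1$ (Fibonacci, where the renormalization is $M_k=M_{k-1}M_{k-2}$ and one must track three consecutive traces) and of $b=2$ accounts for the three cases in the statement.

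Finally I would assemble the two ingredients. With $N(B_k)=1/q_k$, the inequality $N(B_k)\ge|B_k|^{\tilde\gamma}$ is equivalent to $\tilde\gamma\ge\log q_k/\log(1/|B_k|)$, and since $\log q_k\sim k\log(1/\beta)$ and $\log(1/|B_k|)\ge\tfrac{k}{2}\bigl(b\log(\lambda-8)+\log b-b\log 3\bigr)$ this ratio converges to $2\log\beta/(-b\log(\lambda-8)-\log b+b\log 3)$, hence stays below any $\tilde\gamma$ above the threshold for all large $k$; as $|B_k|\to 0$ the endpoints can be taken with $|x-y|<\delta$ for any prescribed $\delta$. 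The main obstacle is the sharpness of the band-length upper bound: one needs the worst-case (shortest) bands together with the precise constants, which demands uniform control of the trace-map dynamics and of the transfer-matrix norms across all levels. The appearance of $\lambda-8$ here, versus $\lambda+5$ in Theorem~\ref{hcont-bb}, reflects that the optimality direction rests on lower bounds for the hyperbolicity whereas the continuity direction rests on upper bounds, and reconciling these two one-sided estimates with matching leading behavior is the technical crux.
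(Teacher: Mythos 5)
Your proposal is correct and follows essentially the same route as the paper: exhibit at each level $k$ a band of length at most roughly $\bigl(3^b/(b(\lambda-8)^b)\bigr)^{k/2}$ that carries density-of-states mass $\simeq 1/q_k \simeq \beta^k$, and compare the two exponential rates. The only difference is one of packaging --- you propose to derive the band-length upper bound directly from trace-map derivative and hyperbolicity estimates, whereas the paper imports it via Lemmas \ref{bl1} and \ref{blub-bb} from the Liu--Wen generating-band machinery, which is itself proved by exactly the estimates you describe.
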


\begin{coro}

For constant continued fraction coefficients, this identifies the optimal asymptotic behavior of $\gamma$ as $\lambda \to \infty$. If $\Gamma(\lambda, b)$ is the optimal H\"older exponent, $\gamma \le \Gamma \le \tilde\gamma$, so that
$$\lim_{\lambda\to\infty} \Gamma(\lambda,b)\log\lambda =  \frac{-2\log\beta}{b}$$
when $b > 3$, and similarly for smaller values of $b$.

\end{coro}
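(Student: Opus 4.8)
The plan is to read Theorems~\ref{hcont-bb} and~\ref{opt-bb} as matching two-sided bounds on the optimal exponent $\Gamma(\lambda,b)$ and then to run a squeeze argument after normalizing by $\log\lambda$. By definition, $\Gamma(\lambda,b)$ is the supremum of those exponents $\gamma$ for which $N$ is locally $\gamma$-H\"older at frequency $\beta=[0;b,b,b,\dots]$ and coupling $\lambda$. Theorem~\ref{hcont-bb} asserts that every $\gamma$ strictly below
\[
\gamma_0(\lambda,b):=\frac{2\log\beta}{-b\log(\lambda+5)-3\log(b+2)}\qquad(b>3)
\]
is such an exponent, so passing to the supremum gives $\Gamma(\lambda,b)\ge\gamma_0(\lambda,b)$. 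Dually, Theorem~\ref{opt-bb} exhibits, for every $\tilde\gamma$ strictly above
\[
\tilde\gamma_0(\lambda,b):=\frac{2\log\beta}{-b\log(\lambda-8)-\log b+b\log 3}\qquad(b>2),
\]
points $x,y$ at arbitrarily small scale violating the $\tilde\gamma$-H\"older bound; hence $N$ fails to be $\tilde\gamma$-H\"older for each such $\tilde\gamma$, and taking the infimum yields $\Gamma(\lambda,b)\le\tilde\gamma_0(\lambda,b)$. Thus $\gamma_0(\lambda,b)\le\Gamma(\lambda,b)\le\tilde\gamma_0(\lambda,b)$, which is the inequality $\gamma\le\Gamma\le\tilde\gamma$ recorded in the statement.

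Next I multiply through by $\log\lambda$ and let $\lambda\to\infty$; the whole content is to check that the two outer quantities share a limit. Writing each denominator as $-b\log\lambda$ plus lower-order terms and using $\log(\lambda\pm c)/\log\lambda\to 1$ together with $(\mathrm{const})/\log\lambda\to 0$, I obtain
\[
\lim_{\lambda\to\infty}\gamma_0(\lambda,b)\log\lambda=\lim_{\lambda\to\infty}\frac{2\log\beta\,\log\lambda}{-b\log(\lambda+5)-3\log(b+2)}=\frac{2\log\beta}{-b}=\frac{-2\log\beta}{b},
\]
and identically $\tilde\gamma_0(\lambda,b)\log\lambda\to\frac{-2\log\beta}{b}$, since the additive constants $-3\log(b+2)$ and $-\log b+b\log 3$ are $o(\log\lambda)$ and wash out. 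By the squeeze theorem $\Gamma(\lambda,b)\log\lambda\to\frac{-2\log\beta}{b}$, establishing the claim for $b>3$.

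For the remaining values the argument is verbatim the same: one records the corresponding cases of the two theorems, multiplies the bounding expressions by $\log\lambda$, and observes that the limits agree, giving $-\log\beta$ when $b=2$ and $\frac{-3\log\beta}{2}$ when $b=1$. There is no genuine obstacle; the substance lives in Theorems~\ref{hcont-bb} and~\ref{opt-bb}, and the corollary is a limit computation. The only point requiring care, and the thing I would verify explicitly in each regime, is that the denominators of $\gamma_0$ and $\tilde\gamma_0$ carry the \emph{same} leading coefficient of $\log\lambda$ (namely $-b$, $-1$, or $-2$ according to the case at hand); it is precisely this calibration of the upper and lower bounds that makes the squeeze collapse to a single value, and one should confirm it holds at each boundary value of $b$ before invoking the normalized limit.
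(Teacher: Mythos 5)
Your squeeze argument is correct and is precisely the (unstated) reasoning behind the corollary, for which the paper offers no separate proof: Theorems \ref{hcont-bb} and \ref{opt-bb} pin $\Gamma$ between two quantities whose products with $\log\lambda$ share the limit $\frac{-2\log\beta}{b}$ when $b>3$, and $-\log\beta$, $\frac{-3\log\beta}{2}$ when $b=2,1$ respectively. Your closing caveat about verifying that the leading coefficients of $\log\lambda$ agree in each regime is well taken --- at $b=3$ the case divisions of the two theorems do not align (the lower bound from the $b=2,3$ case gives limit $-\log\beta$ while the upper bound from the $b>2$ case gives $\frac{-2\log\beta}{3}$), so the squeeze closes only for $b\in\{1,2\}$ and $b>3$, exactly the values you treat.
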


More generally, the qualitative behavior is determined by $\overline{d}(\beta) = \limsup_{N\to\infty} \frac{1}{N}\sum _{i=1} ^N a_i$ and $\underline{d}$, the limit inferior.
\begin{theorem}\label{hcont-d}
If $\overline{d}$ is finite, then $N$ is $\alpha$-H\"older continuous for some $\alpha$. If $\underline{d}$ is infinite, $N$ is not H\"older continuous. It is well known that $\overline{d} = \underline{d} = \infty$ almost everywhere. Thus, for Lebesgue almost all $\beta$, $N$ is not H\"older continuous.
\end{theorem}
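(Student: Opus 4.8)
The plan is to run the band-counting/trace-map machinery that underlies Theorems \ref{hcont-bb} and \ref{opt-bb} \emph{without} specializing the continued fraction coefficients to a constant value, thereby attaching a ``running'' H\"older exponent to each level $k$ of the periodic approximation. Write $q_k$ for the denominators of the convergents of $\beta$, so that $F_k\le q_k\le\prod_{i\le k}(a_i+1)$. The density of states measure distributes mass comparable to $q_k^{-1}$ across the bands of the period-$q_k$ approximant, and the machinery furnishes both an upper bound for the band width $w_k$ (the side used to prove the H\"older bound of Theorem \ref{hcont-bb}) and a lower bound for it (the side used to exhibit narrow bands in Theorem \ref{opt-bb}). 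The content of those two theorems is that, for constant coefficient $b$, $\log(1/w_k)$ grows like $\tfrac{k}{2}\bigl(b\log(\lambda+5)+3\log(b+2)\bigr)$ while $\log q_k$ grows like $k\log(1/\beta)$; the first task is to isolate from that argument the per-step estimate in which the increment of $\log(1/w_k)$ at level $i$ is controlled, above and below, by an expression of the form $a_i\log\lambda+\log(a_i+2)$ (up to scheme-dependent constants and the separate treatment of $a_i\in\{1,2,3\}$). Setting $\gamma_k=\log q_k/\log(1/w_k)$, the optimal local H\"older exponent is governed by $\liminf_k\gamma_k$, and the dichotomy reduces to the arithmetic of the sequence $(a_i)$.

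For the first assertion I would use $\log q_k\ge k\log\phi-C$ together with the upper width estimate $\log(1/w_k)\le C(\lambda)\sum_{i\le k}\bigl(a_i+\log(a_i+2)\bigr)$. If $\overline d<\infty$ then $\sum_{i\le k}a_i=O(k)$, and Jensen's inequality gives $\sum_{i\le k}\log(a_i+2)=O(k)$ as well, so $\log(1/w_k)=O(k)$ uniformly in $k$; hence $\gamma_k\ge c(\lambda)>0$ for all large $k$ and $\liminf_k\gamma_k>0$, which yields $\alpha$-H\"older continuity for any $\alpha$ below this liminf. (Note that $\overline M\le\overline d$ by the AM--GM inequality, so the hypothesis $\overline d<\infty$ already secures the standing assumption.) The covering argument that upgrades the band-scale estimate to a bound on $N([x,y])$ for an arbitrary short interval is exactly the one in the proof of Theorem \ref{hcont-bb}, and I would invoke it rather than reprove it.

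For the second assertion I would instead produce, at every small scale, a single band carrying too much mass for its width. With $A_k=\tfrac1k\sum_{i\le k}a_i$, the lower width estimate $\log(1/w_k)\ge c\log(\lambda)\sum_{i\le k}a_i=c\log(\lambda)\,kA_k$ and the concavity bound $\log q_k\le\sum_{i\le k}\log(a_i+1)\le k\log(A_k+1)$ give
$$\gamma_k\ \le\ \frac{\log(A_k+1)}{c\log(\lambda)\,A_k}.$$
If $\underline d=\infty$ then $A_k\to\infty$, so the right-hand side tends to $0$; consequently, for each $\gamma>0$ and all large $k$ the level-$k$ band $B_k$ satisfies $N(B_k)>|B_k|^{\gamma}$, and since $|B_k|=w_k\to0$ these violations occur at arbitrarily small scales, so $N$ is not $\gamma$-H\"older for any $\gamma>0$. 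The almost-everywhere conclusion is then classical continued-fraction ergodic theory: the Gauss map is ergodic for the Gauss measure, the observable $a_1(x)=\floor{1/x}$ is non-integrable, and Birkhoff's theorem for non-negative non-integrable observables (truncate and let the truncation level grow) forces $\tfrac1N\sum_{i\le N}a_i\to\infty$ almost everywhere, i.e.\ $\overline d=\underline d=\infty$ a.e.; the same theory (Khinchin) gives $\overline M<\infty$ a.e., so the standing hypothesis holds and the second assertion applies to Lebesgue-almost every $\beta$.

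I expect the main obstacle to be that very first step: extracting clean, variable-coefficient \emph{one-sided} bounds on $\log(1/w_k)$ that are linear in $\sum_{i\le k}a_i$, because the estimates behind Theorems \ref{hcont-bb} and \ref{opt-bb} are organized around a fixed $b$ and involve genuinely different geometry for $a_i\in\{1,2,3\}$ versus $a_i\ge4$. The delicate point is to check that the error terms and multiplicative constants produced at each trace-map step do not accumulate so as to destroy linearity in $\sum_{i\le k}a_i$ (for the lower bound feeding the non-H\"older direction) or to spoil the uniform $O(k)$ control (for the upper bound feeding the H\"older direction). Once those per-step bounds are in place, the comparison of $\log q_k$ with $\log(1/w_k)$ through $A_k$ and Jensen's inequality is routine; and, crucially, unlike the sharp asymptotics in the corollary, here the upper and lower width estimates need not match, which makes the qualitative dichotomy considerably more forgiving than the constant-coefficient theorems.
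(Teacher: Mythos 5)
Your proposal is correct and follows essentially the same route as the paper: compare the (at most) geometric growth of $q_k$ against band-width bounds that are linear in $\sum_{i\le k}a_i$, so that the running exponent $\gamma_k$ stays bounded away from zero when $\overline{d}<\infty$ and tends to zero along a sequence of bands when $\underline{d}=\infty$, with the a.e.\ statement supplied by Gauss-map ergodic theory. The differences are only organizational --- you extract the one-sided width estimates directly from the per-level factors where the paper packages them as Lemmas \ref{ub-arb} and \ref{nope-pd}, and in the non-H\"older direction you control $\log q_k$ by $k\log(A_k+1)$ instead of invoking $\overline{M}<\infty$ via Lemma \ref{qgrow} --- and the ``delicate point'' you flag (that an admissible type sequence can realize the $(3/(\lambda-8))^{a_l-1}$ factor on at least half, hence a positive proportion, of the levels) is exactly the content the paper also leaves implicit.
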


\section{Structure of $\Sigma$}

In \cite{LW}, the fine structure of $\Sigma$ is developed enough (along the lines of \cite{R}) to estimate its Hausdorff dimension. This article uses many parts of the apparatus Liu and Wen develop, so we recapitulate the necessary results (without proof). The idea is to approximate $\Sigma$ by finite unions of closed intervals, growing in number and shrinking in size at controlled rates. 

The central objects to approximating $\Sigma$ are the continued fraction approximations to $\beta$. Let $p_k / q_k$ be the $k^{\mathrm{th}}$ convergent to $\beta$. For $k \ge 1$ and $x\in\R$ define the \textit{transfer matrix over $q_k$ sites} by
$$
M_k (E) = \prod_{n=q_k} ^1 \begin{bmatrix}
E - V(n) & -1 \\ 1 & 0
\end{bmatrix}
$$
and put
$$
M_{-1} (E) = \begin{bmatrix} 1 & -\lambda \\ 0 & 1\end{bmatrix},\ M_0 (E) = \begin{bmatrix}E & -1 \\ 1 & 0\end{bmatrix}.
$$
These matrices arise in the spectral theory of a discrete Schr\"odinger operator because they produce the sequences that satisfy the formal difference equation $H\psi = E\psi$.

\begin{prop}\label{basic-trace}
This summarizes work that first appeared in \cite{R}. Let $x_{(k,p)} = \tr M_{k-1} M_k ^p$ and $\sigma_{(k,p)} = \{E \in \R : |x_{(k,p)} (E) | \le 2\}$.  Then:
\begin{enumerate}
\item $M_{k+1} = M_{k-1} M_k^{a_k}$, so that $x_{(k+2,0)} = x_{(k,a_k)}$.
\item For $\lambda > 4$, $\sigma_{(k,p)}$ is made of disjoint closed intervals, equal in number to the degree of $x_{(k,p)}$. These intervals are called \textit{bands}.
\item $\sigma_{(k+2,0)} \cap \sigma_{(k+1,0)} \subset \sigma_{(k+1,0)} \cap \sigma_{(k,0)}$.
\item $\sigma(H) = \bigcap_k (\sigma_{(k+1,0)} \cup \sigma_{(k,0)})$.
\item If $k \in \N$ and $p \ge -1$, $\sigma_{(k,p+1)} \subset \sigma_{(k+1,0)} \cap \sigma_{(k,p)}$. 
\item If $k \in \N$, $p \ge 0$, and $\lambda > 4$, $\sigma_{(k+1,0)} \cap \sigma_{(k,p)} \cap \sigma_{(k,p-1)} = \emptyset$.
\end{enumerate}
\end{prop}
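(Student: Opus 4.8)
The plan is to build everything on two algebraic facts and then extract the geometric (band) statements by induction. First I would establish the transfer-matrix recursion of part~(1). The potential $V$ restricted to the first $q_{k+1}$ sites is coded by a Sturmian word that factors, via the standard substitution rule associated with the continued fraction algorithm, as the concatenation of the word over the first $q_{k-1}$ sites with $a_k$ copies of the word over $q_k$ sites; reading this factorization off at the level of one-step transfer matrices gives $M_{k+1} = M_{k-1}M_k^{a_k}$ (the familiar last-two-letters subtlety of Sturmian words being absorbed, as in \cite{R}). Taking traces and noting $x_{(k+2,0)} = \tr(M_{k+1})$ yields $x_{(k+2,0)} = x_{(k,a_k)}$. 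The second algebraic fact is the Cayley--Hamilton identity $M + M^{-1} = (\tr M)I$ for $M \in \mathrm{SL}(2,\R)$, which applied to $M_k$ gives the three-term recursion
$$x_{(k,p+1)} = x_{(k+1,0)}\, x_{(k,p)} - x_{(k,p-1)},$$
since $\tr(M_k) = x_{(k+1,0)}$. This recursion is the engine for all the remaining parts.

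For part~(2), I would argue that $x_{(k,p)}(E)$ is a real polynomial in $E$ whose degree equals the number of sites $q_{k-1}+p q_k$. A level set $\set{|x_{(k,p)}|\le 2}$ of such a polynomial is a union of at most $\deg x_{(k,p)}$ closed intervals, with equality precisely when every critical value of $x_{(k,p)}$ has modulus strictly greater than $2$ (so each monotone branch contributes one band and the graph meets the lines $\pm 2$ transversally). The substantive step is to prove this hyperbolicity by induction on $(k,p)$ along the recursion, using $\lambda > 4$ to keep the bands of successive generations disjoint and to pin down the signs of the polynomials at the band edges. I expect this inductive separation estimate to be the main obstacle: it is where the coupling threshold $\lambda>4$ is genuinely needed and where the bookkeeping of how bands split is delicate.

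With the recursion in hand, parts~(5), (6), and~(3) follow by examining where the polynomials can simultaneously be small. For part~(5), on $\sigma_{(k,p+1)}$ one has $\abs{x_{(k,p+1)}}\le 2$; rewriting the recursion as $x_{(k,p+1)} + x_{(k,p-1)} = x_{(k+1,0)}\,x_{(k,p)}$ and combining with the band-edge information from part~(2) forces $\abs{x_{(k,p)}}\le 2$ and $\abs{x_{(k+1,0)}}\le 2$ there, giving the claimed inclusion. Part~(3) then follows formally: by part~(1), $\sigma_{(k+2,0)}=\sigma_{(k,a_k)}$, and iterating part~(5) from $p=a_k$ down to $p=0$ yields $\sigma_{(k+2,0)}\subset\sigma_{(k,0)}$, so intersecting with $\sigma_{(k+1,0)}$ gives~(3). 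For part~(6), I would suppose all three of $\abs{x_{(k+1,0)}}$, $\abs{x_{(k,p)}}$, $\abs{x_{(k,p-1)}}$ are at most $2$ at a common energy and derive a contradiction from the recursion: the three-term relation together with the hyperbolicity and derivative bounds from part~(2) is incompatible with three consecutive small traces once $\lambda>4$.

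Finally, part~(4) identifies the spectrum via periodic approximation. Here $\sigma_{(k,0)}=\set{|\tr M_{k-1}|\le 2}$ is exactly the spectrum of the $q_{k-1}$-periodic Schr\"odinger operator obtained by repeating the first $q_{k-1}$ values of $V$. Since $p_{k-1}/q_{k-1}\to\beta$, these periodic potentials agree with $V$ on longer and longer windows, so the periodic operators converge to $H$ in the strong resolvent sense and their spectra approximate $\sigma(H)$. The nesting from part~(3) makes $\sigma_{(k+1,0)}\cup\sigma_{(k,0)}$ a decreasing sequence of compact sets, and a point of $\sigma(H)$ that a single approximant might miss is captured by one of two consecutive generations; combining the two inclusions gives $\sigma(H)=\bigcap_k(\sigma_{(k+1,0)}\cup\sigma_{(k,0)})$.
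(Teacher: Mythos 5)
The paper itself offers no proof of this proposition --- it is quoted from Raymond via Liu--Wen explicitly ``without proof'' --- so your proposal can only be judged on its own terms, and there it has a genuine error in part (5) that propagates to part (3). You claim that $\abs{x_{(k,p+1)}}\le 2$ together with the recursion $x_{(k,p+1)}+x_{(k,p-1)}=x_{(k+1,0)}\,x_{(k,p)}$ ``forces'' $\abs{x_{(k,p)}}\le 2$ and $\abs{x_{(k+1,0)}}\le 2$. It does not: a product can be small in modulus while both factors are large, and no band-edge information changes this. In fact the intersection version of (5) cannot be the intended statement at all: applying it at indices $p$ and $p-1$ and then invoking (6) gives $\sigma_{(k,p+1)}\subset\sigma_{(k+1,0)}\cap\sigma_{(k,p)}\cap\sigma_{(k,p-1)}=\emptyset$ for every $p\ge 0$, so $\sigma_{(k,1)}$ would be empty and the whole band hierarchy of Lemma \ref{band-comb} would collapse. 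The statement in Raymond and Liu--Wen is the \emph{covering} lemma $\sigma_{(k,p+1)}\subset\sigma_{(k+1,0)}\cup\sigma_{(k,p)}$, and (3) likewise holds with unions, which is exactly what makes $\sigma_{(k+1,0)}\cup\sigma_{(k,0)}$ a decreasing sequence so that the intersection in (4) is meaningful. Your derivation of (3) by downward iteration of (5) does survive once the union version is substituted, but the covering lemma itself is not a consequence of the three-term recursion alone; it is part of the same delicate band-splitting induction that you correctly flag, and then defer, in part (2).

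Two further points. For part (6) the tool you are missing is the Fricke--Vogt invariant: taking $A=M_{k-1}M_k^{p-1}$ and $B=M_k$, one has $\tr A=x_{(k,p-1)}$, $\tr B=x_{(k+1,0)}$, $\tr(AB)=x_{(k,p)}$, and $(\tr A)^2+(\tr B)^2+(\tr AB)^2-\tr A\,\tr B\,\tr(AB)-4=\tr(ABA^{-1}B^{-1})-2=\lambda^2$, independent of $k$ and $p$ since the commutator reduces to $M_{k-1}M_kM_{k-1}^{-1}M_k^{-1}$. If all three traces had modulus at most $2$, the left-hand side would be at most $16$, contradicting $\lambda>4$; this one-line computation is the actual source of the threshold, and the appeal to ``hyperbolicity and derivative bounds from part (2)'' is not a substitute. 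For part (4), strong resolvent convergence of the periodic approximants only shows that their spectra accumulate on $\sigma(H)$; the equality requires the two separate hard inclusions (bounded trace-map orbits lie in the spectrum, \`a la S\"uto, and energies outside $\sigma_{(k,0)}\cup\sigma_{(k+1,0)}$ for some $k$ lie in the resolvent set, \`a la Bellissard--Iochum--Scoppola--Testart), neither of which follows from a soft approximation argument. Finally, in part (2) your criterion ``every critical value has modulus strictly greater than $2$'' guarantees at most, not exactly, $\deg x_{(k,p)}$ bands unless one also shows the critical values alternate in sign; the induction that would establish all of this is the core of Raymond's argument and remains unexecuted in your proposal.
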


We will approximate $\Sigma$ using a certain subset of the above bands, called the \textit{generating bands}. For $k \in \N$, define:
\begin{enumerate}
\item A band of type $(k,I)$ is a band of $\sigma_{(k,1)}$ contained in a band of $\sigma_{(k,0)}$.
\item A band of type $(k,II)$ is a band of $\sigma_{(k+1,0)}$ contained in a band of $\sigma_{(k,-1)}$.
\item A band of type $(k,III)$ is a band of $\sigma_{(k+1,0)}$ contained in a band of $\sigma_{(k,0)}$.
\end{enumerate}
For each value of $k$, call the the collection of all bands of the above three kinds the \textit{spectral generating bands at level k}, written $\mathcal{G}_k$. These bands are useful because the combinatorial structure of $\mathcal{G}_{k+1}$ is easy to describe if $\mathcal{G}_k$ is known.

\begin{lemma}\label{band-comb}
For $k\in\N$,
\begin{enumerate}
\item Each band of type $(k,I)$ contains a single generating band; it is a band of $\sigma_{(k+2,0)}$ of type $(k+1,II)$.
\item Each band of type $(k,II)$ contains $a_k +1$ bands of $\sigma_{(k+1,1)}$ of type $(k+1,I)$, and $a_k$ bands of $\sigma_{(k+2,0)}$ of type $(k+1,III)$.
\item Each band of type $(k,III)$ contains $a_k$ bands of $\sigma_{(k+1,1)}$ of type $(k+1,I)$ and $a_k -1$ bands of $\sigma_{(k+2,0)}$ of type $(k+1,III)$.
\end{enumerate}
\end{lemma}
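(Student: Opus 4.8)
The plan is to bring every object in the statement down to level $k$, where a single variable governs all the relevant traces. Using part (1) of Proposition \ref{basic-trace} together with the cyclicity of the trace, I would first record the identities $x_{(k+2,0)} = x_{(k,a_k)}$ (already stated) and $x_{(k+1,1)} = \tr(M_k M_{k-1} M_k^{a_k}) = \tr(M_{k-1}M_k^{a_k+1}) = x_{(k,a_k+1)}$, together with the corresponding expression for $x_{(k+1,-1)}$ in terms of level-$k$ quantities. Hence $\sigma_{(k+2,0)} = \sigma_{(k,a_k)}$ and $\sigma_{(k+1,1)} = \sigma_{(k,a_k+1)}$, and the whole lemma becomes a statement about how the bands of $\sigma_{(k,p)}$, for $p$ running from $-1$ up to $a_k+1$, nest inside one another.

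Next I would make the dependence on one variable explicit. Since each $M_k(E)\in SL(2,\R)$, Cayley--Hamilton gives $M_k^p = S_{p-1}(t)\,M_k - S_{p-2}(t)\,I$, where $t = \tr M_k = x_{(k+1,0)}$ and $S_p$ is the Chebyshev polynomial of the second kind, $S_p(2\cos\theta) = \sin((p+1)\theta)/\sin\theta$. Taking traces yields $x_{(k,p)} = S_{p-1}(t)\,x_{(k,1)} - S_{p-2}(t)\,x_{(k,0)}$, so on any band every relevant trace is a fixed combination of the two ``coefficient'' functions $x_{(k,0)},x_{(k,1)}$ and Chebyshev data in $t$. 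A band of $\sigma_{(k+1,0)} = \{\,|t|\le 2\,\}$ is exactly an interval on which $t$ sweeps $[-2,2]$ monotonically, so parametrizing by $t = 2\cos\theta$ lets $\theta$ run over $[0,\pi]$.

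The counting then reduces to counting the solutions of $x_{(k,a_k)} = \pm 2$ and $x_{(k,a_k+1)} = \pm 2$ as $\theta$ crosses a parent band. On each type of parent band, Proposition \ref{basic-trace}(5) and (6) pin down the sizes and signs of $x_{(k,0)},x_{(k,1)}$: membership in $\sigma_{(k,0)}$ (resp.\ $\sigma_{(k,-1)}$, resp.\ $\sigma_{(k,1)}$) forces one of these into $[-2,2]$, while the disjointness in fact (6), valid for $\lambda>4$, keeps the others large and of definite sign. With that input $x_{(k,p)}(\theta)$ behaves like $[\sin(p\theta)\,x_{(k,1)} - \sin((p-1)\theta)\,x_{(k,0)}]/\sin\theta$, a quasi-sinusoid of frequency $\sim p$, so it meets the strip $[-2,2]$ roughly $p$ times as $\theta:0\to\pi$. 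This is corroborated globally by the degree count $\deg x_{(k,p)} = q_{k-1}+p\,q_k$ and the recursion $q_{k+1}=a_k q_k + q_{k-1}$. Carrying it out for $p=a_k$ and $p=a_k+1$ on each of the three parent types and reading off the endpoint ($\theta=0,\pi$) behavior produces the counts $a_k+1$, $a_k$, and $a_k-1$, the $\pm 1$ corrections being the contributions of the band edges, where $\sin\theta\to 0$.

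Finally I would classify each sub-band by checking its defining containment directly: a counted band of $\sigma_{(k,a_k+1)}=\sigma_{(k+1,1)}$ lying in a band of $\sigma_{(k+1,0)}$ is by definition of type $(k+1,I)$, and a counted band of $\sigma_{(k,a_k)}=\sigma_{(k+2,0)}$ is of type $(k+1,II)$ or $(k+1,III)$ according to whether it sits in a band of $\sigma_{(k+1,-1)}$ or of $\sigma_{(k+1,0)}$, which I would again detect from the signs of the coefficient functions on the parent band. The main obstacle is precisely this classification together with the endpoint count of the previous step: separating $a_k+1$ from $a_k$ from $a_k-1$ requires tight control of $x_{(k,0)},x_{(k,1)}$ near the band edges and of the boundary values of the Chebyshev factors, and it is here that $\lambda>4$ (through fact (6)) is indispensable for keeping the sets disjoint and the signs unambiguous.
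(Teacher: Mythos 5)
The paper does not actually prove this lemma: it is imported from Liu--Wen \cite{LW} (and ultimately Raymond \cite{R}) as part of the apparatus the paper explicitly recapitulates ``without proof,'' so there is no in-paper argument to compare against. Your outline is essentially the proof those references give. The reductions you propose are correct: cyclicity of the trace gives $x_{(k+1,1)}=x_{(k,a_k+1)}$; the Cayley--Hamilton/Chebyshev expansion $x_{(k,p)}=S_{p-1}(t)\,x_{(k,1)}-S_{p-2}(t)\,x_{(k,0)}$ with $t=\tr M_k=x_{(k+1,0)}$ is the standard trace identity; and via the Fricke identity $\tr(AB)+\tr(AB^{-1})=\tr A\,\tr B$ one also gets $x_{(k+1,-1)}=x_{(k,a_k-1)}$, so all four sets $\sigma_{(k+1,-1)}$, $\sigma_{(k+1,0)}$, $\sigma_{(k+1,1)}$, $\sigma_{(k+2,0)}$ really do live at level $k$, and the Floquet parametrization $t=2\cos\theta$ on a band of $\sigma_{(k+1,0)}$ is legitimate.

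The one place the proposal stops short of a proof is exactly the place you flag: the passage from ``a quasi-sinusoid of frequency $\sim p$ meets $[-2,2]$ roughly $p$ times'' to the exact counts $a_k+1$, $a_k$, $a_k-1$ and the type classification. Two points there are only asserted. First, a type $(k,I)$ parent is a band of $\sigma_{(k,1)}$, not of $\sigma_{(k+1,0)}$, so on it $\theta$ runs over a proper subinterval of $[0,\pi]$; the count ``exactly one'' for that case does not follow from the endpoint analysis you describe for full sweeps and needs a separate short argument from facts (5) and (6) of Proposition \ref{basic-trace}. Second, the mutual exclusivity of the labels $II$ and $III$ for the counted bands of $\sigma_{(k+2,0)}$ rests on fact (6) applied at level $k+1$, which you invoke only implicitly. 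Neither issue is a wrong turn --- both are resolved in \cite{R} and \cite{LW} by precisely the sign bookkeeping you describe --- but as written this is a correct and well-aimed plan rather than a complete proof.
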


\begin{figure}\label{band-tree}
\input{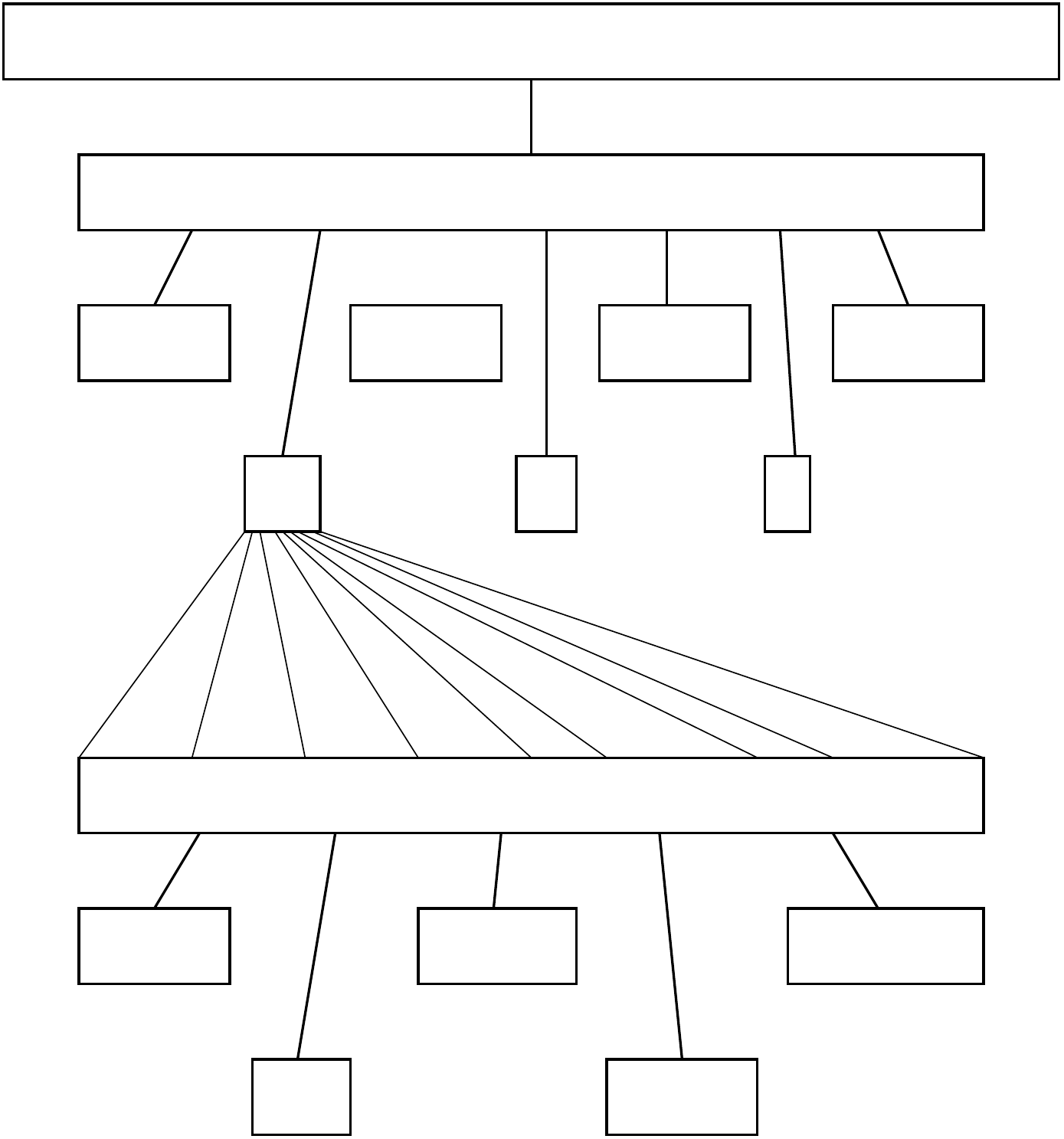_t}
\caption{An illustration of Lemma \ref{band-comb} for $a_i\equiv 3$.}
\end{figure}

To make use of Lemma \ref{band-comb}, we have to understand how $\mathcal{G}_k$ approximates $\Sigma$:

\begin{lemma}\label{gen-spect}
Every generating band of level $k$ is contained in some generating band of level $k-1$, and $\Sigma = \bigcap_{k=0}^\infty \bigcup_{B\in\mathcal{G}_k} B$.
\end{lemma}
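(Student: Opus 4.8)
The statement has two parts, a nesting claim and the set identity, so I would fix the abbreviation $S_k := \bigcup_{B\in\mathcal{G}_k} B$ and treat them in turn. For the nesting, the plan is to read Lemma~\ref{band-comb} ``from below.'' After re-indexing $k\mapsto k-1$, that lemma already exhibits each generating band it lists at level $k$ as a subband of a generating band at level $k-1$, so the only thing left to check is \emph{exhaustiveness}: that every band of type $(k,\mathrm{I})$, $(k,\mathrm{II})$, or $(k,\mathrm{III})$ really occurs among the children enumerated there, with none orphaned. Types $(k,\mathrm{I})$ and $(k,\mathrm{III})$ should appear as children of level-$(k-1)$ bands of types $\mathrm{II}$ and $\mathrm{III}$, and type $(k,\mathrm{II})$ as the unique child of a type-$\mathrm{I}$ band. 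I would verify this by a direct count, comparing the numbers $a_{k-1}$ and $a_{k-1}\pm 1$ supplied by Lemma~\ref{band-comb} against the total number of bands of $\sigma_{(k,1)}$ and of $\sigma_{(k+1,0)}$, the latter read off from the degrees of the trace polynomials via Proposition~\ref{basic-trace}(2) together with the identity $\sigma_{(k+1,0)}=\sigma_{(k-1,a_{k-1})}$ from part (1). This yields the nesting and, in particular, that the sets $S_k$ are decreasing.

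For the identity $\Sigma=\bigcap_k S_k$ I would prove the two inclusions separately. The inclusion $\bigcap_k S_k\subset\Sigma$ is the easy one: a type-$\mathrm{I}$ band lies in $\sigma_{(k,1)}\subset\sigma_{(k,0)}$ by the monotonicity contained in Proposition~\ref{basic-trace}(5), and a type-$\mathrm{II}$ or type-$\mathrm{III}$ band lies in $\sigma_{(k+1,0)}$ by definition, so $S_k\subset\sigma_{(k,0)}\cup\sigma_{(k+1,0)}$. Intersecting over $k$ and applying Proposition~\ref{basic-trace}(4) gives $\bigcap_k S_k\subset\bigcap_k(\sigma_{(k,0)}\cup\sigma_{(k+1,0)})=\Sigma$.

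The reverse inclusion $\Sigma\subset\bigcap_k S_k$ is the heart of the matter, and it suffices to show $\Sigma\subset S_k$ for each fixed $k$. Given $E\in\Sigma$, Proposition~\ref{basic-trace}(4) places $E$ in $\sigma_{(k+1,0)}\cup\sigma_{(k,0)}$ and in every finer approximant $\sigma_{(j+1,0)}\cup\sigma_{(j,0)}$ with $j>k$. I would then locate the generating band containing $E$ by analyzing which of the sets $\sigma_{(k,p)}$, $p\ge-1$, contain $E$: the containment relations of Proposition~\ref{basic-trace}(5) and the disjointness relations (6) organize these sets inside $\sigma_{(k+1,0)}$ into a nested-and-disjoint pattern, effectively partitioning $\sigma_{(k+1,0)}$ into pieces that each meet the defining inclusion of exactly one of the three band types. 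The point of assuming $E\in\Sigma$, rather than merely $E\in\sigma_{(k,0)}\cup\sigma_{(k+1,0)}$, is to exclude $E$ sitting in a piece of the coarse approximant that has no descendant surviving to $\Sigma$; this is precisely where the survival of $E$ to all higher levels, together with the emptiness statement (6), is used.

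I expect this reverse inclusion to be the main obstacle. The difficulty is essentially combinatorial bookkeeping: one must confirm that the three band types cover $\Sigma$ at every level with no spectral point falling between them, and this needs the full force of the containment relation (5), the gap relation (6), and the child structure of Lemma~\ref{band-comb} used in concert. The nesting established first is what makes $\bigcap_k S_k$ a genuine decreasing limit, so that the level-by-level capture of spectral points assembles into the claimed global identity.
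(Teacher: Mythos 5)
The paper does not actually prove Lemma~\ref{gen-spect}: it is quoted from \cite{LW} in a section that explicitly recapitulates results ``without proof,'' so the only fair comparison is with the Raymond/Liu--Wen band analysis that the paper is importing. Measured against that, your architecture is right (nesting from the child lists of Lemma~\ref{band-comb}; $\bigcap_k S_k\subset\Sigma$ from $S_k\subset\sigma_{(k,0)}\cup\sigma_{(k+1,0)}$ and Proposition~\ref{basic-trace}(4); then a level-by-level capture of spectral points), and the easy inclusion as you give it is complete. But the submission is a plan, not a proof: both the exhaustiveness count for the nesting claim and the entire reverse inclusion are described as things you ``would'' do, and the reverse inclusion --- which you yourself flag as the heart of the matter --- is exactly where the content of the lemma lives.

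The concrete hole is the case $E\in\Sigma$ with $E\in\sigma_{(k,0)}\setminus\sigma_{(k+1,0)}$. Your proposed analysis organizes the sets $\sigma_{(k,p)}$ ``inside $\sigma_{(k+1,0)}$,'' but such an $E$ is not in $\sigma_{(k+1,0)}$ at all, so it can only be caught by a type-$(k,I)$ band, i.e.\ a band of $\sigma_{(k,1)}$ sitting inside a band of $\sigma_{(k,0)}$. To get $E$ into $\sigma_{(k,1)}$ one must first use Proposition~\ref{basic-trace}(4) to place $E\in\sigma_{(k+2,0)}$, then the identity $x_{(k+2,0)}=x_{(k,a_k)}$ from part (1) to read this as $E\in\sigma_{(k,a_k)}$, and then descend through the chain $\sigma_{(k,p)}$, $p=a_k,\dots,1$, using the covering/exclusion relations (5)--(6) together with $E\notin\sigma_{(k+1,0)}$; none of this appears in your outline. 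A second, quieter gap affects both this case and your counting argument: the band types are defined by \emph{band-in-band} containment, whereas the relations in Proposition~\ref{basic-trace} only give \emph{point-in-set} membership. Upgrading one to the other requires the interlacing structure of the zeros of the trace polynomials (the substance behind Proposition~\ref{basic-trace}(2), worked out in \cite{R} and \cite{LW}), which you nowhere invoke. Until those two steps are carried out, the identity $\Sigma=\bigcap_k S_k$ is asserted rather than proved.
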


Let us recast the content of Lemma \ref{band-comb} in a form that is more useful for calculations. Put 
$$T_k = \begin{bmatrix}0 & 1 & 0 \\ a_k + 1 & 0 & a_k \\ a_k & 0 & a_k-1\end{bmatrix}.$$
Then Lemma \ref{band-comb} says that for $i,j\in\{I,II,III\}$, a band of type $(k,i)$ spawns $T(i,j)$ bands of type $(k+1,j)$. Given a band $B\in\mathcal{G}_k$, we can associate to it a \textit{type index}, which is the sequence of the types ($I$,$II$, or $III$) of each of its forbears. That is, for $B \in \mathcal{G}_k$, put $\tau(B) = i(0),i(1),\dots,i(k)$ where $B$ is of type $(k,i(k))$ and $B$ is contained in a band of type $(l,i(l))$ for each $l<k$. Using the matrices $T_k$, the number of bands with a given type index $\tau$ is seen to be $\prod_{l=0} ^{k-1} T (\tau(l),\tau(l+1))$. Define $\mathcal{A}$ to be the set of all one-sided infinite words $\tau$ on $\{I,II,III\}$ such that each prefix of $\tau$ is the type index of at least one band. Such type sequences are called \textit{admissible}.

Because of the self-similarity of $\Sigma$, estimates for the lengths of generating bands are amenable to a similar formalism. Put
$$
P_k = \begin{bmatrix}
0 & (3/(\lambda-8))^{a_k-1} & 0 \\
3/(a_k(\lambda - 8)) & 0 & 3/(a_k(\lambda - 8)) \\
 3/(a_k(\lambda - 8)) & 0 & 3/(a_k(\lambda - 8))  
\end{bmatrix},
$$
$$
Q_k = \begin{bmatrix}
0 & (1/(\lambda + 5))^{a_k-1} & 0\\
(\lambda + 5)^{-1}(a_k+2)^{-3} & 0 &(\lambda + 5)^{-1}(a_k+2)^{-3} \\
(\lambda + 5)^{-1}(a_k+2)^{-3} & 0 & (\lambda + 5)^{-1}(a_k+2)^{-3} \\
\end{bmatrix}.
$$

\begin{lemma}\label{bl1}
Then if $B$ is a generating band and $\tau$ its type,
$$4\prod Q_l (\tau(l),\tau(l+1)) \le |B| \le 4\prod P_l (\tau(l), \tau(l+1)).$$
\end{lemma}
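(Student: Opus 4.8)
The plan is to reduce the multiplicative bound to a single-step length-ratio estimate and then telescope along the chain of ancestors. By Lemma \ref{gen-spect}, a generating band $B\in\mathcal{G}_k$ of type $\tau=\tau(0),\dots,\tau(k)$ sits at the bottom of a nested tower $B_0\supset B_1\supset\dots\supset B_k=B$, where $B_l\in\mathcal{G}_l$ is of type $(l,\tau(l))$. Writing the telescoping product
$$|B| = |B_0|\prod_{l=0}^{k-1}\frac{|B_{l+1}|}{|B_l|},$$
it suffices to prove the one-step bounds
$$Q_l(\tau(l),\tau(l+1)) \le \frac{|B_{l+1}|}{|B_l|} \le P_l(\tau(l),\tau(l+1)), \qquad 0\le l\le k-1,$$
together with $|B_0| = 4$. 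The base case is immediate: each level-$0$ generating band is a single band of $\sigma_{(0,1)}$ or of $\sigma_{(1,0)}$, on which an affine trace map ($x_{(0,1)}(E)=E-\lambda$ or $x_{(1,0)}(E)=E$) sweeps $[-2,2]$ monotonically, so $|B_0|=4$. By admissibility each transition $(\tau(l),\tau(l+1))$ corresponds to a strictly positive entry of $T_l$, which are exactly the positive entries of $P_l$ and $Q_l$, so every factor in the products is positive and the estimate is well posed.

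All the content thus lies in the one-step ratio estimate, which I would prove by cases on the transition type $(i,j)$, following the combinatorics of Lemma \ref{band-comb}. The basic device is the Chebyshev-type recursion $x_{(k,p+1)} = \tr(M_k)\,x_{(k,p)} - x_{(k,p-1)}$, a consequence of the $SL(2,\R)$ trace identity, together with the uniform hyperbolicity of the transfer matrices for $\lambda>24$. On any band the governing trace map is monotone onto $[-2,2]$, so $|B_l| = \int_{-2}^{2}dx/|x_l'|$, and the ratio $|B_{l+1}|/|B_l|$ is controlled by comparing the derivative of the child's trace map on $B_{l+1}$ with that of the parent's on $B_l$. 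The short transitions $II\to I$, $II\to III$, $III\to I$ and $III\to III$ each introduce a single extra factor of $M_k$ and hence a ratio of order $\lambda^{-1}$, sharpened to the upper bound $3/(a_k(\lambda-8))$ and the lower bound $(\lambda+5)^{-1}(a_k+2)^{-3}$; the polynomial factors $1/a_k$ and $(a_k+2)^{-3}$ reflect that the $a_k$ sibling bands of a given type must be distributed across the parent interval, and the gap between the powers $1$ and $3$ measures the non-uniformity of their sizes.

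The one genuinely long transition is $I\to II$: a type $(k,I)$ band contains a single type $(k+1,II)$ band, reached by iterating $M_k$ roughly $a_k$ times, since $M_{k+1}=M_{k-1}M_k^{a_k}$. The band then contracts geometrically at rate comparable to $\|M_k\|^{-1}\sim\lambda^{-1}$ per iteration, which produces the exponents $(3/(\lambda-8))^{a_k-1}$ and $(1/(\lambda+5))^{a_k-1}$. I expect this to be the main obstacle: one must bound the cumulative contraction over all $a_k$ steps from both sides, uniformly in $E$ across the band, which calls for sharp two-sided estimates on the growth of $\|M_k^p\|$ and on $|x_{(k,p)}'|$ rather than the crude hyperbolicity bound alone. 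These quantitative band-length inequalities are precisely what Liu and Wen establish in \cite{LW} (extending Raymond \cite{R}); I would therefore take the one-step bounds from their analysis and simply repackage them as the entries of $P_k$ and $Q_k$, after which the telescoping above gives the stated two-sided estimate on $|B|$.
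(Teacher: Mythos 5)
The paper offers no proof of this lemma at all: it appears in the section that ``recapitulate[s] the necessary results (without proof)'' from Liu and Wen \cite{LW}, so there is no argument of the paper's to compare yours against. Your reduction is the right one and matches how the result actually arises in \cite{LW}: telescope $|B|=|B_0|\prod_l |B_{l+1}|/|B_l|$ down the ancestor tower supplied by Lemma \ref{gen-spect}, check $|B_0|=4$ from the affine initial trace maps, observe that the positive entries of $P_k$ and $Q_k$ sit exactly where $T_k$ is positive so that admissible transitions are the only ones that occur, and bound each one-step ratio by the corresponding matrix entry. Since you ultimately take those one-step two-sided estimates from \cite{LW} rather than proving them, your proposal is in the end the same move the paper makes (citation), just with the bookkeeping made explicit; as a self-contained proof it would still owe the reader the derivative comparisons for the trace maps that produce the specific constants $3/(a_k(\lambda-8))$, $(\lambda+5)^{-1}(a_k+2)^{-3}$, and the exponent $a_k-1$ on the $I\to II$ transition. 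One small inaccuracy in your heuristic: for the transitions out of types $II$ and $III$ the parent band lies in $\sigma_{(k+1,0)}$ and the child in $\sigma_{(k+1,1)}$ or $\sigma_{(k+2,0)}$, so the ``single extra factor'' is $M_{k+1}$ rather than $M_k$; this does not affect the structure of the argument, only the labelling of which trace recursion is being invoked at each step.
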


\section{Band Length Estimates}

In this section we consider the asymptotic scaling rate of bands in $\sigma_k$. It is easy to treat the case $\beta = [0;b,b,b,\dots]$ quantitatively, so we take it up first. 

\begin{lemma}\label{ub-bb}
If $\beta = [0;b,b,b,\dots]$, then for all $k$ and $B \in \mathcal{G}_k$,
$$\log\frac{|B|}{4} \ge \begin{dcases}
-\ceil{\frac{k}{2}} (b-1) \log(\lambda+5) - 3\floor{\frac{k}{2}}\log(b+2) - \floor{\frac{k}{2}}\log(\lambda+5), &b>3 \\
-k\log(\lambda+5) - 3k\log(b+2), &b=2,3 \\
\floor{\frac{2k}{3}}\left(-\log(\lambda+5) - 3\log(b+2)\right), &b=1.
\end{dcases}
$$
Let $L(k)$ be the above bound (that is, $|B| \ge L(k)$), and write $L(b,\lambda)$ for $\liminf_{k\to\infty} \frac{1}{k}\log L(k)$.
\end{lemma}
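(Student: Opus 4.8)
The plan is to extract the band length from Lemma~\ref{bl1} and convert the desired uniform lower bound over $B\in\mathcal{G}_k$ into a single combinatorial optimization on the transition graph. Since $\beta=[0;b,b,b,\dots]$ has $a_k\equiv b$, every $Q_k$ equals one fixed matrix $Q$ whose only nonzero entries are $A:=(\lambda+5)^{-(b-1)}$ in the $(I,II)$ slot and $C:=(\lambda+5)^{-1}(b+2)^{-3}$ in the slots $(II,I),(II,III),(III,I),(III,III)$. The admissible type sequences are, by Lemma~\ref{band-comb} (equivalently, by the support of $T_k\equiv T$), exactly the walks of length $k$ in the directed graph with edges $I\to II$, $II\to I$, $II\to III$, $III\to I$, and --- only when $b\ge 2$ --- $III\to III$. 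For such a walk $\tau$ the product appearing in Lemma~\ref{bl1} is the monomial $\prod_l Q(\tau(l),\tau(l+1))=A^{m}C^{k-m}$, where $m=m(\tau)$ is the number of $I\to II$ steps. Because Lemma~\ref{bl1} gives $|B|\ge 4\prod_l Q(\tau(l),\tau(l+1))$ for each band of type $\tau$, a bound $|B|\ge L(k)$ good for \emph{every} $B\in\mathcal{G}_k$ follows the moment I bound $A^{m}C^{k-m}$ below uniformly over all these walks; so the whole lemma reduces to an extremal count of the exponent $m$, and the three cases come from comparing $A$ with $C$.

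When $A\ge C$ --- the regime containing $b=2,3$ --- every factor is at least $C$, whence $\prod_l Q\ge C^{k}$ and, taking logarithms, $\log\frac{|B|}{4}\ge k\log C=-k\log(\lambda+5)-3k\log(b+2)$. When $A<C$ --- the regime containing $b>3$ --- I write $A^{m}C^{k-m}=C^{k}(A/C)^{m}$, which is smallest when $m$ is largest, so I must bound $m$ from above. For this I use the graph: the unique edge leaving $I$ is $I\to II$, while every edge entering $I$ is $C$-type, so along any walk each $I\to II$ step is preceded, except possibly the very first, by a $C$-type step into $I$; hence $m\le(k-m)+1$, i.e. $m\le\ceil{k/2}$. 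Since $A<C$ this gives $\prod_l Q\ge A^{\ceil{k/2}}C^{\floor{k/2}}$, and taking logarithms produces exactly the $b>3$ bound $-\ceil{k/2}(b-1)\log(\lambda+5)-\floor{k/2}\log(\lambda+5)-3\floor{k/2}\log(b+2)$. The alternating walk $I\to II\to I\to II\to\cdots$ attains $m=\ceil{k/2}$, so nothing is wasted.

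The value $b=1$ needs separate treatment because the graph degenerates: Lemma~\ref{band-comb} then gives a type $(k,III)$ band only $a_k-1=0$ type $III$ descendants, so the loop $III\to III$ is inadmissible, while $A=(\lambda+5)^{0}=1$ makes each $I\to II$ step free. Now $\prod_l Q=C^{\,k-m}$, and to minimize it I maximize the number $k-m$ of $C$-type steps, i.e. minimize the number of visits to $I$. Since $III$ is forced to $I$ and $I$ is forced to $II$, the most $C$-economical walk simply repeats the $3$-cycle $II\to III\to I\to II$, spending two of every three edges on $C$-type steps; this caps $k-m$ at roughly $2k/3$ and yields $\prod_l Q\ge C^{\floor{2k/3}}$, which is the stated $b=1$ bound. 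Dividing each of the three bounds by $k$ and letting $k\to\infty$ then reads off $L(b,\lambda)=\liminf_{k\to\infty}\frac1k\log L(k)$, the floors and ceilings washing out in the limit.

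I expect the main obstacle to be the extremal count itself: showing that the exponent $m$ (the number of visits to state $I$) obeys the claimed bound for \emph{every} admissible walk, not just the optimal one, while tracking the correct floors and ceilings and the threshold $A=C$ that separates the cases. The $b=1$ graph, with its different adjacency, is the only genuinely separate computation, since there the relevant extremal walk is a $3$-cycle rather than the $2$-cycle that governs $b>3$.
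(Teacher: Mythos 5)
Your proposal is correct and follows essentially the same route as the paper: it lower-bounds the lower bound of Lemma \ref{bl1} by minimizing $\prod_l Q(\tau(l),\tau(l+1))$ over admissible type sequences, splitting into cases according to whether the $(I,II)$ entry of $Q$ dominates the other nonzero entries, with the extremal walks being the $2$-cycle $I\to II\to I$ for $b>3$, the $III\to III$ loop for $b=2,3$, and the $3$-cycle $II\to III\to I$ for $b=1$ --- your version just makes the paper's one-line "a look at the matrix $Q_k$ shows" explicit via the exponent count $m\le\ceil{k/2}$. (The only caveat, inherited from the paper rather than introduced by you, is that the comparison $A\ge C$ for $b=3$ actually requires $\lambda\le 120$, since $A\ge C$ there is equivalent to $\lambda+5\le 125$.)
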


\begin{proof}

Our task is to bound the lower bound in Lemma \ref{bl1} from below as $B$ ranges over all of $\mathcal{G}_k$. A look at the matrix $Q_k$ shows that if $b\ge 4$ the bound in question is minimized whenever $\tau(B)$ has the greatest admissible amount of $I$s in it. This can occur at most half the time (by Lemma \ref{band-comb}), proving the formula in the first case.

On the other hand, when $b < 4$ the minimum is achieved by a band with the greatest possible amount of $II$s and $III$s in its type index. When $b \ne 1$, this is possible for every entry of $\tau(B)$, giving the second case, and for $b=1$ it is possible $2/3$ of the time.

\end{proof}

\begin{lemma}\label{blub-bb}
If $\beta = [0;b,b,b,\dots]$, then for each $k$, there is a band of $\mathcal{G}_k$ with
$$\log\frac{|B|}{4} \le \begin{dcases}
-\ceil{\frac{k}{2}} (b-1)(\log(\lambda-8)-\log 3) - \floor{\frac{k}{2}}(\log b + \log(\lambda-8) - \log 3), &b>2 \\
-k(\log(\lambda-8) - \log(b) + \log 3), &b=2 \\
\frac{-2k}{3}(\log(\lambda-8) - \log(b) + \log 3), &b=1.
\end{dcases}
$$
Let $U(k)$ be the above bound and define $U(b,\lambda)=\limsup _{k\to\infty} \frac{1}{k}\log\frac{U(k)}{4}$.
\end{lemma}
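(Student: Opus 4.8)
The plan is to prove this as the exact counterpart of Lemma~\ref{ub-bb}. There one minimizes the \emph{lower} estimate $4\prod_l Q_l(\tau(l),\tau(l+1))$ over all admissible type words, so as to bound \emph{every} band from below; here I instead minimize the \emph{upper} estimate $4\prod_l P_l(\tau(l),\tau(l+1))$ of Lemma~\ref{bl1}, so as to produce a single exceptionally short band. Since $\beta=[0;b,b,\dots]$ makes every $P_l$ equal to one fixed matrix $P$, the task reduces to exhibiting, for each $k$, an admissible word $\tau$ of length $k+1$ for which $\sum_{l}\log P(\tau(l),\tau(l+1))$ is as negative as possible; the generating band $B\in\mathcal{G}_k$ carrying that type index then obeys $\log\frac{|B|}{4}\le\sum_l\log P(\tau(l),\tau(l+1))$, and I would check that this sum equals the displayed quantity.

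Next I would read the transition weights off $P$: a step $I\to II$ carries weight $(b-1)(\log 3-\log(\lambda-8))$, while each of $II\to I$, $II\to III$, $III\to I$, $III\to III$ carries weight $\log 3-\log b-\log(\lambda-8)$, the self-loop $III\to III$ being admissible only for $b\ge 2$ (its multiplicity is $a_k-1$, by Lemma~\ref{band-comb}). Exactly as in Lemma~\ref{ub-bb}, inspecting these entries identifies the minimizing type index in each regime. For $b>2$ the $I\to II$ step is the most negative, so the product is minimized by the word with the greatest admissible number of $I$'s, i.e.\ the alternating word $I,II,I,II,\dots$; this uses $I$ half the time, giving $\ceil{k/2}$ steps $I\to II$ and $\floor{k/2}$ steps $II\to I$, which is the first case. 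For $b=2$ the remaining transitions become cheaper, so the minimum is achieved by maximizing $II$'s and $III$'s, possible at every entry (the word that stays at $III$), giving $k$ cheap steps and the second case. For $b=1$, where $III\to III$ is forbidden, this can be arranged only $2/3$ of the time, via the $3$-periodic word through $II\to III\to I$, giving $\tfrac{2k}{3}$ cheap steps and the third case.

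The crux is the optimization that singles out these words, since it is where the case division $b>2$ versus $b\le 2$ comes from; it is the $P$-analogue of the threshold $b\ge 4$ governing Lemma~\ref{ub-bb}, the shift arising from the differing entries of $P$ and $Q$. Phrased as a minimal-mean-cycle comparison among the only relevant simple cycles — the $2$-cycle $I\to II\to I$, the self-loop at $III$, and the $3$-cycle $II\to III\to I$ — one must verify that the $2$-cycle wins precisely when $(b-2)(\log 3-\log(\lambda-8))+\log b<0$, and this holds for every $b\ge 3$ once $\lambda>24$ (for $b=3$ it reads $2\log 3<\log(\lambda-8)$, valid since $\lambda-8>16$). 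The remaining obstacle is routine bookkeeping: confirming that each extremal word is genuinely realized by a generating band for \emph{every} $k$ — which follows from Lemma~\ref{band-comb}, since any walk respecting the nonzero transitions is the type index of at least one band — and tracking the parity that produces the $\ceil{k/2}$ and $\floor{k/2}$, the bounded cost of entering the optimal cycle being negligible for the asymptotic rate $U(b,\lambda)$.
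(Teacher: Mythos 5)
Your proposal is correct and takes essentially the same route as the paper: the paper's proof of this lemma is literally the single sentence ``the analysis is the same as for the previous lemma, substituting $P_k$ for $Q_k$,'' and you carry out exactly that substitution, reading the transition weights off $P$ and minimizing over admissible type words. You in fact supply more than the paper does, by verifying explicitly (via the cycle comparison $(b-2)(\log 3-\log(\lambda-8))+\log b<0$ for $\lambda>24$) why the case threshold shifts from $b\ge 4$ in Lemma \ref{ub-bb} to $b>2$ here.
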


\begin{proof}
The analysis is the same as for the previous lemma, substituting $P_k$ for $Q_k$.
\end{proof}

Now we consider arbitrary $a_i$s. In this case the optimization problem treated in the above estimates becomes intractable, so we focus on the qualitative behavior.

\begin{lemma}\label{nope-pd}
If $\underline{d}$ is infinite, the sequence of lengths of shortest bands of $\sigma_{k+1,0}$ decays to zero faster than any geometric sequence.
\end{lemma}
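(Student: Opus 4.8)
The plan is to exhibit, for each $k$, a single short band of $\sigma_{(k+1,0)}$ and to bound its length from above using the right-hand estimate of Lemma~\ref{bl1}. Write $\theta = 3/(\lambda-8)$; since $\lambda>24$ we have $0<\theta<3/16<1$. Inspecting $P_l$, every nonzero entry is at most $1$, and the only entry that can be genuinely small is the $(I,II)$ entry $\theta^{a_l-1}$, which is tiny precisely when $a_l$ is large. So the goal is to find an admissible type index $\tau$, ending in $II$ or $III$ (so that the associated generating band lies in $\sigma_{(k+1,0)}$), whose $I\to II$ transitions fall at indices $l$ where $a_l$ is large.

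First I would read off the admissible transitions from Lemma~\ref{band-comb}: a band of type $I$ is always followed by type $II$, while from types $II$ and $III$ one may pass to either $I$ or $III$. Consequently the positions at which $\tau$ equals $I$ are separated by gaps of at least two, and each such position $l<k$ contributes exactly the factor $\theta^{a_l-1}$ to the product in Lemma~\ref{bl1}. Since every factor in that product is $\le 1$, writing $S=\{l<k:\tau(l)=I\}$ gives
$$\log\frac{|B|}{4}\ \le\ \log\theta\sum_{l\in S}(a_l-1).$$
Thus it suffices to choose an admissible $\tau$ whose set $S$ of $I$-positions captures a definite fraction of $\sum_{l<k}a_l$.

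The key observation is that the two alternating words $I,II,I,II,\dots$ and $II,I,II,I,\dots$ are both admissible and have their $I$-positions equal to exactly the even, respectively odd, indices; choosing whichever captures the larger mass yields an $S$ with $\sum_{l\in S}a_l\ge\tfrac12\sum_{l<k}a_l$ and gaps all equal to $2$. (If the chosen word would end in $I$ at position $k$, I replace that final $I$ by $III$ via the allowed transition $II\to III$; this changes neither $S$ nor the captured mass, and places the band in $\sigma_{(k+1,0)}$.) Hence, with $\ell_k$ the length of the shortest band of $\sigma_{(k+1,0)}$,
$$\frac{1}{k}\log\frac{\ell_k}{4}\ \le\ \log\theta\left(\frac{1}{2k}\sum_{l<k}a_l-1\right).$$
Because $\underline{d}=\infty$ forces $\tfrac1k\sum_{l<k}a_l\to\infty$ and $\log\theta<0$, the right-hand side tends to $-\infty$, so $\tfrac1k\log\ell_k\to-\infty$; that is, $\ell_k$ decays faster than any geometric sequence.

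I expect the main obstacle to be only the bookkeeping that justifies the even/odd selection cleanly: one must confirm that the pure alternating words are admissible from the prescribed initial type (a bounded-prefix matter that does not affect the asymptotic rate) and that capturing \emph{either} parity class is genuinely permitted by the transition rules rather than just one of them. The real content of the argument is that, although the $I\to II$ transitions are constrained to occur at most every other step (exactly the ``half the time'' phenomenon already visible in Lemma~\ref{ub-bb}), one of the two parity classes always carries at least half of $\sum a_l$, so no adversarial placement of the large partial quotients can obstruct super-exponential decay.
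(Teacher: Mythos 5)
Your proof is correct and takes essentially the same route as the paper's: exhibit, for each $k$, one band of $\sigma_{(k+1,0)}$ whose length, controlled by the product of $P_l$ entries from Lemma \ref{bl1}, is at most $\exp(-ckd_k)$ with $d_k$ the running average of the $a_l$, and note that this decays super-geometrically when $\underline{d}=\infty$. The paper compresses this into two sentences by appeal to the analysis of Lemma \ref{blub-bb}; your even/odd parity selection of the $I$-positions is exactly the missing bookkeeping, and it correctly produces the constant $c=-\tfrac{1}{2}\log\bigl(3/(\lambda-8)\bigr)$ that the paper leaves implicit.
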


\begin{proof}
We argue roughly as in the proof of Lemma \ref{blub-bb}. Indeed, $\overline{d}=\infty$ means there is a $c>0$ such that in each $\sigma_{k+1,0}$ there is a band whose length is less than $\exp(-c k d_k)$ (where $d_k$ is the average of the first $k$ coefficients of $\beta$). Since $c$ is independent of $k$ and $d_k$ diverges, $\exp(-c k d_k)$ approaches zero faster than any geometric sequence.
\end{proof}

\begin{lemma}\label{ub-arb}
If $\overline{d}$ is finite, then there is a geometric sequence that bounds the length of every band in $\sigma_k$ from below.
\end{lemma}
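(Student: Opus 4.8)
The plan is to reduce the claim to a uniform lower bound on the product $\prod_{l} Q_l(\tau(l),\tau(l+1))$ from Lemma~\ref{bl1}, taken over all admissible type indices $\tau$. Since that lemma gives $|B| \ge 4\prod_{l} Q_l(\tau(l),\tau(l+1))$ for any generating band $B$ of type $\tau$ (and the bands of $\sigma_k$ are organized by the generating bands via Lemmas~\ref{band-comb} and~\ref{gen-spect}), it suffices to produce a constant $C'$, independent of $\tau$ and $k$, with
\[
\sum_{l} \bigl|\log Q_l(\tau(l),\tau(l+1))\bigr| \le C' k .
\]
Exponentiating then yields $|B| \ge 4 r^k$ with $r = e^{-C'}\in(0,1)$, which is the desired geometric sequence.

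The first step is a per-step bound obtained by inspecting $Q_l$. The transition $I\to II$ contributes $(a_l-1)\log(\lambda+5)$ in absolute value, while every other admissible transition contributes $\log(\lambda+5)+3\log(a_l+2)$. Since $a_l\ge 1$, in either case
\[
\bigl|\log Q_l(\tau(l),\tau(l+1))\bigr| \le a_l\log(\lambda+5) + 3\log(a_l+2),
\]
so that $\sum_{l}\bigl|\log Q_l\bigr| \le \log(\lambda+5)\sum_{l} a_l + 3\sum_{l}\log(a_l+2)$, a bound now independent of which admissible transitions actually occur.

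Next I would invoke the hypothesis $\overline{d}<\infty$. A sequence whose Cesàro averages have finite $\limsup$ is bounded above, so $D := \sup_N \frac{1}{N}\sum_{i=1}^N a_i < \infty$ and hence $\sum_{l=1}^k a_l \le Dk$. For the logarithmic term, concavity of $\log$ (Jensen's inequality) gives $\frac{1}{k}\sum_{l=1}^k \log(a_l+2) \le \log\bigl(\frac{1}{k}\sum_{l=1}^k a_l + 2\bigr) \le \log(D+2)$. Combining these, $\sum_{l}\bigl|\log Q_l\bigr| \le k\bigl[D\log(\lambda+5)+3\log(D+2)\bigr]$, which is exactly the form needed, with ratio $r = (\lambda+5)^{-D}(D+2)^{-3}$.

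The main obstacle is conceptual rather than computational: the coefficients $a_l$ are unbounded, so individual band-length factors---especially the $I\to II$ factor $(\lambda+5)^{-(a_l-1)}$---can be arbitrarily small, and no uniform per-step geometric bound exists. The resolution is that $\overline{d}<\infty$ controls only the \emph{average} of the $a_l$, and the per-step estimate above is affine in $a_l$ plus a concave $\log(a_l+2)$ term, both of which are governed by that average (the latter precisely through Jensen's inequality). I expect the only bookkeeping to verify with care is the identification of ``bands of $\sigma_k$'' with the generating bands to which Lemma~\ref{bl1} applies; the factor-of-two slack relative to the sharp constant-$b$ estimate of Lemma~\ref{ub-bb} (which exploits that $I$ occurs at most half the time) is harmless for this purely qualitative statement.
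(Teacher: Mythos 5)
Your proof is correct and takes essentially the same approach as the paper: both reduce the claim to the lower bound of Lemma~\ref{bl1} and observe that $\overline{d}<\infty$ controls the Ces\`aro averages of $a_l$ (and hence of $\log(a_l+2)$), which is exactly what keeps $\prod_l Q_l(\tau(l),\tau(l+1))$ from decaying faster than geometrically. The paper phrases this as a brief proof by contradiction, while your direct, quantitative version with the explicit ratio $r=(\lambda+5)^{-D}(D+2)^{-3}$ is, if anything, more complete.
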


\begin{proof}
Suppose not, so that for some faster-than-geometric sequence $l_k$, $\sigma_k$ has a band of length less than $l_k$. This forces the lower bound of \ref{bl1} to decay to zero faster than geometrically for some admissible type sequence. This means that $a_i$ has a subsequence $a_{i_n}$ such that $\frac{1}{k} \sum _{\{n : i_n \in [1,k]\}} \log c^{a_{i_n}} $ or $\frac{1}{k} \sum _{\{n : i_n \in [1,k]\}} \log a_{i_n}$ diverge. Both contradict $\overline{d} < \infty$ (and the latter also contradicts $\overline{M} < \infty$).
\end{proof}

\section{H\"older Continuity}

\begin{lemma}\label{qgrow}
Recalling that $\overline{M}(\beta) < \infty$, the sequence $q_k$ of denominators of convergents to $\beta$ is bounded above and below by geometric sequences.
\end{lemma}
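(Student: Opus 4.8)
The plan is to read both bounds directly off the standard recursion for the denominators of the convergents, namely $q_k = a_k q_{k-1} + q_{k-2}$ with initial data $q_{-1} = 0$, $q_0 = 1$. The lower bound will use nothing beyond $a_k \ge 1$, while the upper bound is exactly where the hypothesis $\overline{M}(\beta) < \infty$ enters.

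For the lower bound I would observe that since $a_k \ge 1$ for every $k$, the recursion gives $q_k = a_k q_{k-1} + q_{k-2} \ge q_{k-1} + q_{k-2}$. Thus the $q_k$ dominate, term by term, the sequence generated by the Fibonacci recursion with the same initial data; a trivial induction confirms $q_k \ge F_{k+1}$. Since the Fibonacci numbers grow like $\phi^{k}$ with $\phi = (1+\sqrt5)/2$, this produces a constant $c > 0$ with $q_k \ge c\,\phi^k$, a geometric lower bound whose ratio exceeds $1$.

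For the upper bound the same recursion together with $q_{k-2} \le q_{k-1}$ gives $q_k \le (a_k + 1)\,q_{k-1}$, and iterating from $q_0 = 1$ yields $q_k \le \prod_{i=1}^k (a_i + 1)$. It remains to turn $\overline{M}(\beta) < \infty$ into geometric growth of this product. Because each $a_i \ge 1$ we have $a_i + 1 \le 2a_i$, so $\sqrt[k]{\prod_{i=1}^k (a_i+1)} \le 2\sqrt[k]{\prod_{i=1}^k a_i}$, and passing to the $\limsup$ bounds this quantity by $2\overline{M} < \infty$. Fixing any $M' > 2\overline{M}$, there is an index $K$ with $\prod_{i=1}^k (a_i+1) \le (M')^k$ for all $k \ge K$; absorbing the finitely many terms $k < K$ into a constant $C$ gives $q_k \le C (M')^k$. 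Since $M' > 2\overline{M} \ge 2 > 1$, this is a genuine geometric upper bound.

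I do not expect a serious obstacle here: all the content is in converting the multiplicative hypothesis on $\overline{M}$ into a bound on $\prod (a_i+1)$. The only point requiring care is that it is a $\limsup$, not a limit, that is assumed finite, so one cannot take $M' = 2\overline{M}$ but must allow an arbitrary $M' > 2\overline{M}$ and tolerate the multiplicative constant $C$ that accounts for the initial segment where $\sqrt[k]{\prod a_i}$ may exceed $2\overline{M}$. The upshot is that both bounds are geometric, though not with sharp ratios, which is all that the later arguments require.
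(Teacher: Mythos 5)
Your proof is correct and follows essentially the same route as the paper: both read the upper bound off the iteration $q_k \le (a_k+1)q_{k-1}$ and use $\overline{M}(\beta)<\infty$ to control $\prod(a_i+1)$ (you via $a_i+1\le 2a_i$, the paper via $\log(x+1)\le\log x+1/x$), and both get the lower bound from the recursion. In fact your Fibonacci comparison $q_k\ge F_{k+1}$ is the more robust way to get a genuinely growing geometric lower bound, since the paper's $q_{k+1}\ge a_{k+1}q_k$ only yields $q_k\ge\prod a_i$, which is trivial when all $a_i=1$.
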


\begin{proof}
By definition,
$$q_{k+1} = a_{k+1} q_k + q_{k-1}.$$
Since $q_k$ increases monotonically, it follows that $q_{k+1} \le (a_{k+1} + 1)q_k$. Thus
$$\log q_{k+1} \le \sum_{i=1} ^k \log(a_{i+1} + 1).$$
Since $\log(x+1) \le \log(x) + 1/x$ and $\sum _{i=1} ^k 1/a_i$ grows no faster than $k$, $\log q_k$ is bounded above by an arithmetic sequence.

Again using monotonicity of $q_k$, we get $q_{k+1} \ge a_{k+1} q_k$. Repeating the above reasoning, we finish the proof.
\end{proof}

\begin{proof} [Proof of Theorem \ref{hcont-bb}]

For arbitrary $x_0 < y_0$ that are close enough, we want to estimate $N(y_0) - N(x_0)$ from above. Because there is one Dirichlet eigenvalue of $H_{q_k}$ associated to each band of $\sigma_{k+1,0}$ (\cite{H}),
$$N(y_0) - N(x_0) = \lim_{n \to \infty} \frac{\# \sigma(H_n) \cap [x_0,y_0]}{n} = \lim_{k \to \infty} \frac{\# \sigma_{k+1,0} \cap [x_0,y_0]}{q_k},$$
where $\#X$ is the cardinality of $X$. This amounts to finding a bound on $\# \sigma_{k+1,0} \cap [x_0,y_0]$. With $L$ the bound on band lengths of Lemma \ref{ub-bb}, define $m$ by
$$L(m+1) \le y_0 - x_0 < L(m).$$
Because $N$ is supported on $\Sigma$, the interval $[x_0, y_0]$ can be replaced with $[x_0, y_0] \cap \Sigma$. Every point of $\Sigma$ is contained in a generating band (Lemma \ref{gen-spect}), so it is not a loss to assume $[x_0, y_0]$ is contained in a generating band. Then, by the definition of $m$, there is a band $[x,y] \in \mathcal{G}_m$ containing it (notice that $y-x$ is comparable in size to $y_0 - x_0$, because of how $m$ is defined). We now have
$$N(y_0) - N(x_0) \le \lim_{k\to\infty} \frac{\#\sigma_{k+1,0} \cap [x,y]}{q_k}.$$

This ratio scales like $1/q_m$. Indeed, it is equal to $1/q_m$ at $k=m$. As $k$ increases to infinity, Lemma \ref{band-comb} shows that every band of $\sigma_{m+1,0}$ produces a roughly constant proportion of the bands that comprise $\sigma_{k+1,0}$. This means the share of bands in $\sigma_{k+1,0}$ produced by $[x,y]$ remains practically constant. So, pick $C$ so that $N(y_0) - N(x_0) \le C/q_m$.

Define $\gamma_k$ by
\begin{equation}\label{gamma-def}\gamma_k = \frac{\log C + \log q_k}{\log L(k+1)},\end{equation}
so that $C/q_m = L(m+1)^{\gamma_m}$, which is by the definiton of $m$ less than $(y-x)^{\gamma_m}$.

Take any
$$0 < \gamma < \liminf_{k\to\infty} \gamma_k = \frac{\log \beta}{L(b,\lambda)},$$
and choose $k_0$ so that $\gamma_k > \gamma$ for $k > k_0$. Put $\delta = L(k_0)$. Then if $a<b$ satisfy $b-a < \delta$, $N(b) - N(a) \le (b-a)^{\gamma_m}$, where $m$ is the integer corresponding to $[a,b]$. This is less than $(b-a)^\gamma$.

\end{proof}

\begin{proof} [Proof of Theorem \ref{opt-bb}]
    
Recall $U(k)$ from Lemma \ref{blub-bb}; $\log U$ goes to $-\infty$ roughly linearly. 

Let a supposed H\"older exponent $\tilde\gamma$ be given, satisfying the hypotheses of the theorem. Given $\delta$, pick $k_0$ so that $U(k_0) < \delta$ and
$$\gamma_m = \frac{-m\log \beta}{\log U(m)}$$
is less than $\tilde\gamma$ for all $m \ge k_0$. This is possible for all $\gamma > \log\beta/U(b,\lambda)$. Now, choose $[x,y] \in \sigma_{k_0}$ so that $y-x \ge U(k_0)$.

The exponent $\gamma_m$ is constructed so that $U(m)^{\gamma_m} = \beta^m$. We have already seen in the proof of \ref{hcont-bb} that
$$N(y) - N(x) = \lim_{k\to\infty} \frac{\#\sigma_k \cap [x,y]}{q_k} \simeq 1/q_{k_0}.$$
By construction, this is greater than $(y-x)^{\gamma_{k_0}} > (y-x)^\gamma$.
\end{proof}

\begin{proof} [Proof of Theorem \ref{hcont-d}]

We first prove that $N$ is H\"older continuous when $\overline{d}$ is finite. The proof of Theorem \ref{hcont-bb} may be followed until \ref{gamma-def}. By Lemma \ref{ub-arb}, $\log L(k)$ lies between two arithmetic sequences. And by Lemma \ref{qgrow}, the same goes for $\log q_k$. This implies $\liminf _{k\to\infty} \gamma_k > 0$, which is the necessary input to obtain H\"older continuity.

Now assume $\underline{d}=\infty$. We will follow the framework of the the proof of Theorem \ref{opt-bb} and see that the optimal H\"older exponent is zero. Let $U(k)$ stand for the sequence of upper bounds on band length obtained in Lemma \ref{nope-pd}; $\log U$ goes to $-\infty$ faster than any linear function. Also, recall that, by Lemma \ref{qgrow}, there is an $R$ so that $R^m$ grows faster than $q_m$.

Let a supposed H\"older exponent $\gamma >0 $ be given. Given $\delta$, pick $k_0$ so that $U(k_0) < \delta$ and
$$\gamma_m = \frac{-m\log R}{\log U(m)}$$
is less than $\gamma$ for all $m \ge k_0$. This is possible for all $\gamma > 0$ since $\lim_{m\to\infty} \gamma_m = 0$. Now, choose $[x,y] \in \sigma_{k_0}$ so that $y-x \ge U(k_0)$.

The exponent $\gamma_m$ is constructed so that $U(m)^{\gamma_m} = R^{-m}$. Again,
$$N(y) - N(x) = \lim_{k\to\infty} \frac{\#\sigma_k \cap [x,y]}{q_k} \simeq 1/q_{k_0}.$$
Because of the hypothesis on $\beta$, $1/q_{k_0} \ge R^{k_0} = U(k_0)^{\gamma_{k_0}}$. By construction, this is greater than $(y-x)^{\gamma_{k_0}} > (y-x)^\gamma$, proving that $N$ is not H\"older continuous.

\end{proof}

\section*{Acknowledgement}
The author is grateful to his advisor, David Damanik, for many helpful discussions and comments.

\end{document}